\newtheorem{theorem}{Theorem}
\newtheorem{corollary}{Corollary}
\newtheorem{definition}{Definition}
\newtheorem{remark}{Remark}
\newtheorem{problem}{Problem}
\newtheorem{proposition}{Proposition}
\newtheorem{myfact}{Fact}
\newenvironment{proof}{\textit{Proof.}}{\hfill \tiny $\blacksquare$}
\newcommand{\one}{\ensuremath{\mathbf{1}}}
\newcommand{\real}{\ensuremath{\mathbb{R}}}
\newcommand{\nat}{\ensuremath{\mathbb{N}}}
\newcommand{\bmat}[1]{\ensuremath{\begin{bmatrix}#1 \end{bmatrix}}}
\def\qedp{\hspace*{\fill}~{\tiny $\blacksquare$}}
\DeclareMathOperator{\ver}{vert}
\begin{document}
\begin{frontmatter}

\title{Data-based guarantees\\ of set invariance properties\thanksref{footnoteinfo}}

\thanks[footnoteinfo]{
This research is partially 
supported by a Marie Sk\l{}odowska-Curie COFUND grant, no.~754315.
}

\author[First]{Andrea Bisoffi}
\author[First]{Claudio De Persis} 
\author[Second]{Pietro Tesi}
		
\address[First]{ENTEG and the J.C. Willems Center for Systems and Control, University of Groningen, 9747 AG Groningen, 
The Netherlands (email: \{a.bisoffi, c.de.persis\}@rug.nl).}
\address[Second]{DINFO, University of Florence, 50139 Florence, Italy
(email: pietro.tesi@unifi.it)}

\begin{abstract}
For a discrete-time linear system, we use data from a single open-loop experiment to design directly a feedback controller enforcing that a given (polyhedral) set of the state is invariant and given (polyhedral) constraints on the control are satisfied. By building on classical results from model-based set invariance and a fundamental result from Willems et al., the controller designed from data has the following desirable features. The satisfaction of the above properties is guaranteed only from data, it can be assessed by solving a numerically-efficient linear program, and, under a certain rank condition, a data-based solution is feasible if and only if a model-based solution is feasible.
\end{abstract}
\begin{keyword}
Data-based control; Control of constrained systems; Constrained control; Linear Systems; Linear programming;
Convex optimization.
\end{keyword}

\end{frontmatter}

\section{INTRODUCTION}\label{sec:intro}

Data-driven control design is an approach that aims at 
designing control laws
based on input-output data collected from a system through an experiment.
As such, data-driven control bypasses completely the identification of a model of the plant from the input-output data. 

Auto-tuning methods (e.g., Ziegler and Nichols's method for proportional integral derivative controllers) can be seen as a seminal instance of data-driven control. 
More recent data-driven control techniques 
addressing model reference and tracking problems
include iterative feedback tuning \citep{hjalmarsson1998iterative}, 
virtual reference feedback tuning \citep{campi2002virtual}, 
iterative correlation-based tuning \citep{karimi2004iterative, Karimi2}, and
unfalsified control \citep{batt2018}.
Data-driven methods have been considered also in connection with 
other control problems, including nonlinear \citep{DFK,D2IBC},
predictive \citep{Salvador2018},
robust \citep{Dai2018}
as well as optimal control \citep{Markovsky2007,
Mukherjee2018,Baggio2019,Goncalves2019}.

Most recently, a fundamental result from~\cite{willems2005note} has been given new attention because of its deep implications for data-driven control. Namely, \cite{willems2005note} 
claims in broad terms that the whole set of solutions of a \emph{linear} system can be represented by a finite set of solutions as long as those arise from sufficiently excited dynamics. This result has been exploited 
in \cite{coulson2019data} for data-based predictive control, and in
\cite{depersis2019persistency} for data-driven stabilization and optimal control.
\cite{depersis2019persistency} shows in particular that the result by Willems 
et al. can be used to achieve a data-based parametrization of feedback systems,
enabling the design of (optimal) controllers directly via data-dependent linear matrix 
inequalities, also in the presence of noisy data. 
This idea has been further developed 
in \cite{vanwaarde2019data} to show that data-driven stabilization is possible 
even when data are not sufficiently rich to enable system identification, 
and in \cite{berberich2019robust} where -- by formulating the data-based parametrization of closed-loop systems in the presence of noisy data obtained in \cite{depersis2019persistency} as a linear fractional transformation --  data-driven $H_\infty$ control is investigated,
thus providing further evidence for developing a theory of data-driven control.

Except for contributions in the area of predictive control
such as \cite{Salvador2018} and \cite{Berberich2019mpc}, most of the works
on data-driven control do not account for state and input constraints,
which are one of the prime issues in many practical problems. 
In addition to the aforementioned papers, contributions to data-driven 
control in the presence of (state and input) constraints, also termed \emph{safe control},
are found in the literature on learning-based control \citep{Garcia2015} and
on safety certificates for learning-based control by convex optimization
\citep{wabersich2018scalable}, see also Remark~\ref{rem:wabersichZeilinger} for a detailed comparison with our approach.

In this paper, we consider data-driven safe control
using notions from set invariance \citep{blanchini1999set}.
Specifically, we consider linear time invariant (LTI) systems in discrete time, i.e.,
\begin{equation*}
x^+ = A x + B u,
\end{equation*}
and study the problem of designing a control law based on
a finite number of input-state data in such a way that the 
controlled system satisfies prescribed safety constraints,  
characterized in terms of set invariance and 
$\lambda$-contractivity (recalled in Definitions~\ref{def:(ctrl)Inv} 
and \ref{def:lambda-contractive}).

Set invariance is a dynamical property in its own right, 
and it is quite relevant as it translates the notion of safety, i.e., if
the system has initial state in this (safe) set, its solutions will not leave the set. 
Invariance is a dynamical property that is less conservative than asymptotic stability 
(e.g., for continuous-time dynamical systems, invariance of a set is essentially 
equivalent to the fact that at each point of the \emph{boundary} of the set, 
the vector field is included in the tangent cone to the set by the classical 
result in~\cite{nagumo1942lage}), but arguably as essential in practical settings. 
Ellipsoidal and polyhedral sets are common choices in the study of invariance properties, 
with the former being the level sets of classical quadratic Lyapunov functions for linear systems. 
The complexity in the representation of an ellipsoidal set contained in $\real^\nu$ is given by $\nu$, whereas that of a polyhedral set can be arbitrarily high, 
e.g., due to an arbitrarily high number of planes or vertices defining the polyhedral set. 
On the other hand, ellipsoidal sets cannot arbitrarily approximate 
any convex and compact set, whereas polyhedral sets can (see the discussion in \cite[p.~110]{blanchini2008set}).
For this reason, we consider here polyhedral sets.

Controlled invariance of polyhedral sets for discrete-time linear systems 
has been thoroughly investigated in the late 80's 
assuming exact knowledge of the matrices $A$ and $B$ above, and key results were given
\citep{gutman1986admissible,vassilaki1988feedback,blanchini1990feedback}. 
These results consider, among others, the presence of disturbances on 
the state equation and parametric uncertainties in the dynamical matrices. 
We refer the reader to the comprehensive survey \cite{blanchini1999set} 
and the monograph \cite{blanchini2008set} for an overview of these results.

Building on the notions of invariance and $\lambda$-contractivity, 
we show that the problem of designing safe controllers directly from data 
can be cast as a \emph{linear program}, which can thus be efficiently solved.
Further, as in \cite{vassilaki1988feedback,blanchini1990feedback}, 
the solution takes the form of a state-feedback gain, which avoids to
iteratively solving an online optimization problem as in receding-horizon 
predictive control and learning-based methods. On the other hand, in this paper we do not investigate optimality features of the safe controller.

The paper is organized as follows. 
Section \ref{sec:prelim} introduces the problem of interest
along with some preliminaries on set invariance.
The main results are given in Section \ref{sec:main}, while
Section \ref{sec:noise} provides a preliminary result 
in the case of noisy data. A numerical example is 
discussed in Section \ref{sec:example}, and Section \ref{sec:conclusion}
provides concluding remarks.

\textbf{Notation.} 
$\mathbb{Z}$, $\mathbb{N}$, and $\mathbb{R}$ denote the sets of integers, of nonnegative integers, and of real numbers.
For $n \in \nat$, $\nat_n := \{ 1, \dots, n\}$.
For column vectors $x_1 \in \real^{d_1}$, \dots, $x_m \in \real^{d_m}$, 
the notation $(x_1, \dots,  x_m)$ is equivalent to $[x_1^\top \dots x_m^\top]^\top$.
The $n\times n$ identity matrix is denoted by $I_n$.
The vector $\one$ denotes the vector of all ones of appropriate dimension, i.e., $\one:=(1,\dots,1)$.
Given two $n \times m$ matrices $A$ and $B$, $A \ge 0$ 
indicates that each entry of $A$ is nonnegative, and $A \ge B$ is equivalent to $A- B \ge 0$.
For a polyhedron $\mathcal{A}$, $\ver \mathcal{A}$ is the set of its vertices.
Given a set $\mathcal{A}$ and a scalar $\mu \ge 0$, $\mu \mathcal{A}:= \{ \mu x \colon x \in \mathcal{A} \}$.

\section{PROBLEM STATEMENT AND PRELIMINARIES}\label{sec:prelim}

In this section we give our problem statement and 
present essential preliminaries on set invariance.

\subsection{Problem statement}

We consider discrete-time linear time invariant (LTI) systems
\begin{equation}
\label{eq:dtLTI-alt}
x^+ = A x + B u,
\end{equation}
with state $x \in \real^n$ and input $u \in \real^m$. 
Before we introduce our sets of interest, we need the next notion.

\begin{definition}\emph{\cite[Def.~3.10]{blanchini2008set} }\label{def:C-set}
A \emph{C-set} is a convex and compact subset of 
$\real^\nu$ including the origin as an interior point.
\end{definition}

The first set of interest is the set $\mathcal{S}$ relative to the state $x$, 
which is based on a matrix $\mathrm{S} \in \real^{n_s \times n}$ with rows 
$\mathrm{S}^{(i)}$, $i=1, \dots, n_s$. 
The set $\mathcal{S}$ is a polyhedral C-set represented through $\mathrm{S}$ as
\begin{equation}
\label{eq:set S}
\begin{split}
\mathcal{S} := & \{ x \in \real^n  \colon \mathrm{S} x \le \one \} \\
=&  \{ x \in \real^n \colon \mathrm{S}^{(i)} x \le 1, i = 1, \dots, n_s \}.
\end{split}
\end{equation}
The second set of interest is the set $\mathcal{U}$ relative to the input $u$, 
which is based on a matrix $\mathrm{U} \in \real^{n_u \times m}$ 
with rows $\mathrm{U}^{(i)}$, $i=1, \dots, n_u$. 
The set $\mathcal{U}$ is a polyhedral 
convex set (including the origin as an interior point) represented through $\mathrm{U}$ as
\begin{equation}
\label{eq:set U}
\begin{split}
\mathcal{U} := & \{ u \in \real^m \colon \mathrm{U} u \le \one \} \\
= & \{ u\in \real^m \colon \mathrm{U}^{(i)} u \le 1, i = 1, \dots, n_u \}.
\end{split}
\end{equation}
We would like to impose that the 
state $x$ remains confined in the
set $\mathcal{S}$, while input $u$ is constrained in the set $\mathcal{U}$. 
To this end, we introduce the next notion of (controlled) invariance. 

\begin{definition} \emph{\cite[Defs.~4.1, 4.4]{blanchini2008set}}
\label{def:(ctrl)Inv}
A set $\mathcal{S} \subset \real^n$ is \emph{invariant} for
\begin{equation}
\label{eq:dtLTI-0}
x^+ = F x
\end{equation}
if each solution to \eqref{eq:dtLTI-0} with initial condition 
$x(0) \in \mathcal{S}$ is such that $x(t) \in \mathcal{S}$ for all $t \ge  0$.
A set $\mathcal{S} \subset \real^n$ is \emph{controlled invariant}
for
\begin{equation}
\label{eq:dtLTI}
x^+ = A x + B u
\end{equation}
if there exists a control function $u \colon \mathcal{S} \to \real^m$ 
such that for each $x(0) \in \mathcal{S}$, the corresponding solution 
to~\eqref{eq:dtLTI} satisfies $x(t) \in \mathcal{S}$ for all $t \ge  0$.
\end{definition}

We would like to impose that $\mathcal{S}$ is invariant and $u$ 
satisfies the constraints given by $\mathcal{U}$ without the knowledge 
of the matrices  $A$ and $B$, by relying only 
on a number of data samples collected from the system. 
Specifically, we make an experiment on the system by applying a
sequence $u_d(0),\dots, u_d(T-1)$ of inputs 
and measuring the corresponding values $x_d(0),\dots, x_d(T)$ of the state response, 
where the subscript $d$ emphasizes that these are data. 
Following the notation in~\cite{depersis2019persistency}, 
we organize these data as
\begin{subequations}
\label{eq:data}
\begin{align}
U_{0,T}
& :=
\begin{bmatrix}
u_d(0) & \dots & u_d(T-1)
\end{bmatrix} \label{eq:data:U0T} \\
X_{0,T}
& :=
\begin{bmatrix}
x_d(0) & \dots & x_d(T-1)
\end{bmatrix} \label{eq:data:X0T} \\
X_{1,T}
& :=
\begin{bmatrix}
x_d(1) & \dots & x_d(T)
\end{bmatrix}. \label{eq:data:X1T}
\end{align}
\end{subequations}

We can now state the problem of interest.
\begin{problem}
\label{probl:state}
Given a polyhedral C-set $\mathcal{S}$ as in~\eqref{eq:set S} and a 
polyhedral convex set $\mathcal{U}$ as in~\eqref{eq:set U}, 
find a state-feedback law $u = K x$, with gain matrix $K$ based only on the data in~\eqref{eq:data},
that \emph{guarantees} that $\mathcal{S}$ is invariant, the origin is asymptotically stable, 
and the control input $u=K x$ always belongs to $\mathcal{U}$. 
\end{problem}

For brevity, we say in the following that $\mathcal{S}$ is 
\emph{admissible for $\mathcal{U}$} if for each $x \in \mathcal{S}$, we have $K x \in \mathcal{U}$ (for some matrix $K$).

\subsection{Preliminaries on (model-based) set invariance}

In Problem~\ref{probl:state}, we ask that $\mathcal{S}$ is invariant 
and the origin is asymptotically stable. 
These two properties can be embedded in the notion of $\lambda$-contractivity defined next.

\begin{definition}\emph{\cite[Def.~4.19]{blanchini2008set}}
\label{def:lambda-contractive}
A C-set $\mathcal{S}$ is \emph{$\lambda$-contractive} for 
\begin{equation}
x^+ = F x
\end{equation}
if for some $\lambda \in [0,1)$, for each $x \in \mathcal{S}$
\begin{equation}
\inf \{ \lambda' \ge 0 \colon F x \in \lambda' \mathcal{S} \} \le \lambda.
\end{equation}

A C-set $\mathcal{S}$ is \emph{$\lambda$-contractive} for 
\begin{equation}
x^+ = A x + B u
\end{equation}
if for some $\lambda \in [0,1)$, there exists a control function 
$u \colon \mathcal{S} \to \real^m$ such that for each $x \in \mathcal{S}$
\begin{equation}
\inf \{ \lambda' \ge 0 \colon Ax + B u(x) \in \lambda' \mathcal{S} \} \le \lambda.
\end{equation}
\end{definition}

Note that if we allow $\lambda=1$ in Definition~\ref{def:lambda-contractive}, 
we recover invariance and controlled invariance of Definition~\ref{def:(ctrl)Inv} as a special case.
We recall the next result on $\lambda$-contractivity.

\begin{myfact}\emph{\cite[Thm.~4.43]{blanchini2008set}}
\label{fact:equivLambdaContr}  
Given a system
\begin{equation} \label{eq:F}
x^+ = F x
\end{equation}
and a polyhedral C-set $\mathcal{S}$ of the form~\eqref{eq:set S} 
with $\mathrm{S} \in \real^{n_s \times n}$, the set $\mathcal{S}$ is $\lambda$-contractive 
for \eqref{eq:F} if and only if there exists a matrix $P \ge 0$ such that
\begin{align}
& P \one \le \lambda \one, \\
& P \mathrm{S} = \mathrm{S} F.
\end{align}
\end{myfact}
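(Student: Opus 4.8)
The plan is to prove the equivalence by treating the rows of $\mathrm{S}$ one at a time, reducing $\lambda$-contractivity to a family of linear programs and then invoking linear-programming duality (equivalently, the affine Farkas lemma). First I would rewrite contractivity in explicit form. Since $\mathcal{S}$ in~\eqref{eq:set S} is a polyhedral C-set, for any $z\in\real^n$ and any $\lambda'>0$ one has $z\in\lambda'\mathcal{S}$ iff $\mathrm{S}z\le\lambda'\one$, so $\inf\{\lambda'\ge 0:z\in\lambda'\mathcal{S}\}=\max\{0,\ \max_{i}\mathrm{S}^{(i)}z\}$. Because $\lambda\in[0,1)$ is nonnegative, Definition~\ref{def:lambda-contractive} then says that $\mathcal{S}$ is $\lambda$-contractive for $x^+=Fx$ exactly when $\mathrm{S}^{(i)}Fx\le\lambda$ for every $x\in\mathcal{S}$ and every $i\in\{1,\dots,n_s\}$, i.e.\ when $\sup\{\mathrm{S}^{(i)}Fx:\mathrm{S}x\le\one\}\le\lambda$ for each $i$.

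The ``if'' direction is then immediate: if $P\ge 0$ satisfies $P\one\le\lambda\one$ and $P\mathrm{S}=\mathrm{S}F$, then for any $x\in\mathcal{S}$ we have $\mathrm{S}Fx=P\mathrm{S}x\le P\one\le\lambda\one$, where the first inequality uses $P\ge 0$ componentwise together with $\mathrm{S}x\le\one$. Hence $Fx\in\lambda\mathcal{S}$, so the infimum in Definition~\ref{def:lambda-contractive} is at most $\lambda$ and $\mathcal{S}$ is $\lambda$-contractive.

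For the ``only if'' direction, fix $i$. As a C-set, $\mathcal{S}$ is a nonempty (it contains the origin) compact polyhedron, so the linear program $\max\{\mathrm{S}^{(i)}Fx:\mathrm{S}x\le\one\}$ is feasible and bounded; by strong duality its value equals $\min\{\one^\top p:\ p\ge 0,\ \mathrm{S}^\top p=(\mathrm{S}^{(i)}F)^\top\}$, and the dual optimum is attained. By the reformulation above, $\lambda$-contractivity is equivalent to this value being $\le\lambda$ for every $i$, hence to the existence, for each $i$, of $p_i\ge 0$ with $\mathrm{S}^\top p_i=(\mathrm{S}^{(i)}F)^\top$ and $\one^\top p_i\le\lambda$. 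Collecting the rows $p_i^\top$ into the ($n_s\times n_s$) matrix $P$, so that $P^{(i)}=p_i^\top$, these conditions become precisely $P\ge 0$, $P\mathrm{S}=\mathrm{S}F$, and $P\one\le\lambda\one$.

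The main obstacle is this last direction. One must check that the hypotheses for strong duality (or for the affine Farkas lemma applied to the implication ``$\mathrm{S}x\le\one\Rightarrow\mathrm{S}^{(i)}Fx\le\lambda$'') genuinely hold: primal feasibility is clear since $0\in\mathcal{S}$, and boundedness is exactly the compactness of the C-set $\mathcal{S}$. One then has to transpose the dual equality constraints carefully so that the vectors $p_i$ assemble into $P$ with $P\mathrm{S}=\mathrm{S}F$ and not its transpose. The remaining steps are routine bookkeeping with the gauge function of $\mathcal{S}$.
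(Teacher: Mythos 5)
Your proof is correct. Note that the paper itself gives no proof of this Fact: it is quoted verbatim from \cite[Thm.~4.43]{blanchini2008set}, so there is nothing internal to compare against. Your argument --- reducing $\lambda$-contractivity to the gauge condition $\max_i \mathrm{S}^{(i)}Fx\le\lambda$ on $\mathcal{S}$ and then applying LP strong duality row by row (with feasibility from $0\in\mathcal{S}$ and boundedness from compactness of the C-set) to assemble the dual multipliers $p_i^\top$ into $P$ --- is precisely the standard proof of that theorem via the extended Farkas lemma, and all the bookkeeping (the dual constraint $\mathrm{S}^\top p_i=(\mathrm{S}^{(i)}F)^\top$ transposing to the $i$-th row of $P\mathrm{S}=\mathrm{S}F$, and $\one^\top p_i\le\lambda$ giving $P\one\le\lambda\one$) checks out.
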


We have the next relationship between $\lambda$-contractivity and asymptotic stability.

\begin{myfact}\emph{\cite[Cor.~4.52]{blanchini2008set}}
\label{fact:relationContrAS}
Given a system
$x^+ = F x$, there exists a polyhedral C-set which is $\lambda$-contractive 
if and only if all the eigenvalues of $F$ have modulus less or equal to 
$\lambda$ and all the eigenvalues for which the equality holds have phases that are rational multiples of $\pi$
\footnote{Namely, their phase $\theta$ can be expressed as $\theta=\frac{p}{q} \pi$ for some integers $p$ and $q$.}.
\end{myfact}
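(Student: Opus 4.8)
The plan is to prove the two implications separately: the ``only if'' part via Fact~\ref{fact:equivLambdaContr}, the ``if'' part by constructing an invariant polytope explicitly from the real Jordan structure of $F$. For necessity, suppose a polyhedral C-set $\mathcal{S}=\{x:\mathrm{S}x\le\one\}$ is $\lambda$-contractive for $x^+=Fx$. I would first note that boundedness of $\mathcal{S}$ forces $\mathrm{S}$ to have full column rank $n$ (otherwise $\mathcal{S}$ would contain a line), and then invoke Fact~\ref{fact:equivLambdaContr} to get $P\ge 0$ with $P\one\le\lambda\one$ and $P\mathrm{S}=\mathrm{S}F$. The first two relations say that $P$ is a nonnegative matrix whose row sums are all at most $\lambda$, so $\|P\|_\infty\le\lambda$ and $\rho(P)\le\lambda$; iterating the intertwining identity gives $P^k\mathrm{S}=\mathrm{S}F^k$, and since $\|P^k\|_\infty\le\lambda^k$, left-multiplying by a left inverse of $\mathrm{S}$ yields $\|F^k\|\le c\,\lambda^k$ for a constant $c$, whence $\rho(F)\le\lambda$ and every eigenvalue of $F$ of modulus exactly $\lambda$ is semisimple. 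To control the \emph{phases}, I would use that each eigenvalue $\mu$ of $F$ is also an eigenvalue of $P$: from $Fw=\mu w$, $w\ne 0$, one gets $P(\mathrm{S}w)=\mu(\mathrm{S}w)$ with $\mathrm{S}w\ne 0$. So, when $|\mu|=\lambda>0$, the number $\mu/\lambda$ is a unimodular eigenvalue of the nonnegative matrix $P/\lambda$, whose spectral radius equals $1$; the Perron--Frobenius theory of the peripheral spectrum of nonnegative matrices --- bring $P/\lambda$ to Frobenius normal form and apply Frobenius' theorem to each irreducible diagonal block --- then shows that every such eigenvalue is a root of unity, hence $\mu=\lambda e^{\mathrm{i}\theta}$ with $\theta$ a rational multiple of $\pi$. (If $\lambda=0$ there is nothing to prove, $F$ being nilpotent.)

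For sufficiency, assume all eigenvalues of $F$ have modulus $\le\lambda$ and those of modulus $\lambda$ are semisimple with phase a rational multiple of $\pi$ (semisimplicity is needed for the construction and is anyway forced, by the necessity argument above). Taking $\lambda>0$ (for $\lambda=0$, $F$ is nilpotent and an invariant polytope is built directly from a Jordan basis), I would set $G:=F/\lambda$ and split $\real^n=V_c\oplus V_s$ into $G$-invariant subspaces, $V_c$ the sum of the generalized eigenspaces of $F$ for eigenvalues of modulus $\lambda$ and $V_s$ the sum of the remaining ones, so that in an adapted basis $G=\diag(G_c,G_s)$. By hypothesis $G_c$ is diagonalizable with roots-of-unity eigenvalues, so $G_c^N=I$ with $N$ the least common multiple of their orders; hence for any polytopic C-set $\mathcal{Q}\subset V_c$, the hull $\mathcal{P}_c:=\mathrm{conv}\bigl(\bigcup_{j=0}^{N-1}G_c^j\mathcal{Q}\bigr)$ is a polytopic C-set of $V_c$ with $G_c\mathcal{P}_c=\mathcal{P}_c$. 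Since $\rho(G_s)<1$, a polytopic inner approximation of the unit ball of a norm in which $\|G_s\|<1$ gives a polytopic C-set $\mathcal{P}_s\subset V_s$ with $G_s\mathcal{P}_s\subseteq\bar\lambda\mathcal{P}_s$ for some $\bar\lambda\in[0,1)$. Then $\mathcal{S}:=\{x_c+x_s:x_c\in\mathcal{P}_c,\ x_s\in\mathcal{P}_s\}$ is a polytopic C-set of $\real^n$, and $G\mathcal{S}=G_c\mathcal{P}_c\oplus G_s\mathcal{P}_s\subseteq\mathcal{P}_c\oplus\bar\lambda\mathcal{P}_s\subseteq\mathcal{S}$, i.e.\ $F\mathcal{S}\subseteq\lambda\mathcal{S}$, which is precisely $\lambda$-contractivity of $\mathcal{S}$.

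I expect the main obstacle to be the phase statement in the necessity direction: passing from the algebraic data $P\ge 0$, $P\one\le\lambda\one$, $P\mathrm{S}=\mathrm{S}F$ to a constraint on the \emph{arguments} of the peripheral eigenvalues of $F$ genuinely requires the structure theory of nonnegative matrices (Frobenius' theorem on the imprimitive spectrum); plain norm and spectral-radius bounds control only the moduli. On the sufficiency side the only delicate point is making the ``critical'' factor $\mathcal{P}_c$ a \emph{polytope} rather than just a compact convex set, and that is exactly what the finiteness of the group generated by $G_c$ --- itself a consequence of the rational-phase hypothesis --- delivers.
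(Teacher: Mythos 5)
The paper offers no proof of this Fact: it is quoted directly from \cite[Cor.~4.52]{blanchini2008set}, so there is no in-paper argument to compare yours against. Taken on its own, your reconstruction is essentially the standard one and is sound in outline. For necessity, the chain $P^k\mathrm{S}=\mathrm{S}F^k$, $\|P^k\|_\infty\le\lambda^k$, injectivity of $\mathrm{S}$ (from boundedness of $\mathcal{S}$) $\Rightarrow\|F^k\|\le c\lambda^k$ correctly yields $\rho(F)\le\lambda$ and semisimplicity of the modulus-$\lambda$ eigenvalues, and the transfer $Fw=\mu w\Rightarrow P(\mathrm{S}w)=\mu(\mathrm{S}w)$ with $\mathrm{S}w\ne 0$, followed by the Frobenius normal form and Frobenius' theorem on the peripheral spectrum of an irreducible nonnegative matrix, is exactly what is needed for the phase condition; norm bounds alone would indeed not reach it. For sufficiency, the splitting $\real^n=V_c\oplus V_s$, the finite-orbit hull on $V_c$ (finiteness of the group generated by $G_c$ being the payoff of the rational-phase hypothesis) and a contractive polytope on $V_s$ work; the only cosmetic gap is that an \emph{arbitrary} polytopic inner approximation of the unit ball of the contracted norm does not satisfy $G_s\mathcal{P}_s\subseteq\bar\lambda\mathcal{P}_s$ --- you need it sufficiently tight, i.e., $(1-\epsilon)B_*\subseteq\mathcal{P}_s\subseteq B_*$ with $\epsilon<1-\|G_s\|_*$, or the usual hull $\mathrm{conv}\bigl(\bigcup_{j\ge0}\bar\lambda^{-j}G_s^j\mathcal{Q}\bigr)$, which is finite because $\bar\lambda^{-j}G_s^j\to0$.

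The one genuine logical problem is the semisimplicity hypothesis you add in the sufficiency direction. As printed, the Fact asks only that the eigenvalues have modulus $\le\lambda$ and that those of modulus $\lambda$ have rational phases; under that hypothesis alone the ``if'' direction is \emph{false}: $F=\smat{\lambda & 1\\ 0 & \lambda}$ with $\lambda\in(0,1)$ satisfies both conditions, yet $\|F^k\|/\lambda^k\to\infty$ excludes any $\lambda$-contractive C-set, polyhedral or not. You correctly sense that semisimplicity is indispensable, but your justification --- ``anyway forced by the necessity argument above'' --- is circular at the point where you invoke it: in the sufficiency direction you are handed only the eigenvalue conditions, not the existence of a contractive polytope, so the necessity argument gives you nothing there. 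What you have actually proved is the corrected equivalence (modulus $\le\lambda$, and the modulus-$\lambda$ eigenvalues semisimple with phases that are rational multiples of $\pi$); that is the true statement, and is what the cited corollary must contain once ``modulus $\le\lambda$'' is read as including the marginal-stability clause that the paper's transcription drops (compare the paper's own Remark~\ref{rem:l-contrImpliesAS}, which for $\lambda=1$ speaks of marginal stability \emph{plus} conditions on the unit-modulus eigenvalues). Your $\lambda=0$ aside has the same flaw: a nonzero nilpotent $F$ admits an invariant polytope but not a $0$-contractive one, so there too the conclusion requires $F=0$, which is precisely semisimplicity of the zero eigenvalue. State the corrected hypothesis up front and derive semisimplicity honestly in the necessity half (as you in fact do), and the proof is complete.
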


Some comments on Fact~\ref{fact:relationContrAS} are relevant for the sequel and are stated in the next remarks.

\begin{remark} \label{rem:l-contrImpliesAS}
As a consequence of Fact~\ref{fact:relationContrAS}, 
if a polyhedral C-set $\mathcal{S}$ is $\lambda$-contractive, 
then the origin (contained in the interior of $\mathcal{S}$ by 
Definition~\ref{def:C-set}) is asymptotically stable. 
Instead of imposing 
that $\mathcal{S}$ is invariant and the origin is asymptotically stable in 
Problem~\ref{probl:state}, we impose in the sequel that $\mathcal{S}$ is $\lambda$-contractive. 
Invariance of $\mathcal{S}$ ($\lambda=1$) is equivalent to marginal 
stability of the origin along with certain conditions on the eigenvalues with unitary 
modulus \citep[Thm.~4.50]{blanchini2008set}, and does not guarantee 
asymptotic stability of the origin as required by Problem~\ref{probl:state}. 
Hence imposing $\lambda < 1$ is convenient to have asymptotic stability of the origin.
\end{remark}

\begin{remark} \label{rem:ctrlAndExistence}
For state-feedback control laws $u = K x$ as in Problem~\ref{probl:state}, 
controllability of the pair $(A,B)$ implies that the 
closed-loop eigenvalues of $A+BK$ can be assigned to 
satisfy the necessary and sufficient condition in Fact \ref{fact:relationContrAS},
hence there exists a polyhedral C-set which is $\lambda$-contractive for $A+BK$.
\end{remark}

\section{Data-based design and guarantees for $\lambda$-contractivity} \label{sec:main}

We now present our data-based solution to Problem~\ref{probl:state}. By the
foregoing considerations, we address this problem in the context of 
$\lambda$-contractivity.

Given system \eqref{eq:dtLTI-alt},  $\mathcal{S}$, $\mathcal{U}$ and $u$ as in Problem~\ref{probl:state} and level of contractivity $\lambda\in [0,1)$, we have that
$\mathcal{S}$ is $\lambda$-contractive for $x^+ = (A+BK)x$
and admissible for $\mathcal{U}$ 
if and only if there exist decision variables $K$ and $P \ge 0$
such that
\begin{subequations}
\label{eq:origSol}
\begin{align}
& P \one \le \lambda \one \label{eq:lambdaContr1} \\
& P \mathrm{S} = \mathrm{S} (A+ BK) \label{eq:lambdaContr2} \\
& \mathrm{U} K s \le \one \quad \forall s \in \ver \mathcal{S}.  \label{eq:ctrlConstr}
\end{align}
\end{subequations}

Indeed, $\lambda$-contractivity of $\mathcal{S}$ is equivalent 
to~\eqref{eq:lambdaContr1}-\eqref{eq:lambdaContr2} by Fact~\ref{fact:equivLambdaContr}, 
and admissibility of $\mathcal{S}$ for $\mathcal{U}$ is equivalent to 
\begin{equation*}
K s \in \mathcal{U} \quad \forall s \in \ver \mathcal{S}
\end{equation*}
since $\mathcal{U}$ is a polyhedral convex set, and the last 
expression is equivalent to~\eqref{eq:ctrlConstr}. 
As noted in Remark~\ref{rem:l-contrImpliesAS}, a matrix $K$ that satisfies \eqref{eq:origSol} 
solves Problem~\ref{probl:state}.

We have the next result.

\begin{theorem}
\label{thm:main}
Consider $\mathcal{S}$, $\mathcal{U}$ and $u$ as in Problem~\ref{probl:state} and level of contractivity $\lambda\in [0,1)$. 
Let the data matrices $U_{0,T}$, $X_{0,T}$ and $X_{1,T}$ be as in \eqref{eq:data}.
If there exist decision variables $G_K$ and $P\ge0$ such that
\begin{subequations}
\label{eq:LP-data}
\begin{align}
& P \one \le \lambda \one \label{eq:lambdaContr-data} \\
& P \mathrm{S} = \mathrm{S} X_{1,T} G_K \\ 
& \mathrm{U} U_{0,T} G_K s \le \one 
\quad \forall s \in \ver \mathcal{S} \label{eq:ctrlConstr-data}\\
& I_n = X_{0,T} G_K,  \label{eq:consist-data} 
\end{align}
\end{subequations}
then the state-feedback gain
\begin{equation}
\label{eq:K-data}
K = U_{0,T} G_K 
\end{equation}
is such that
$\mathcal{S}$ is $\lambda$-contractive for the 
closed-loop system $x^+ = (A+BK)x$ and admissible for $\mathcal U$.
\end{theorem}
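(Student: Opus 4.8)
The plan is to show that the data-based feasibility conditions \eqref{eq:LP-data} imply the model-based conditions \eqref{eq:origSol} for the gain $K = U_{0,T} G_K$; once this is established, the conclusion follows immediately from the equivalences recorded just before the statement, i.e., Fact~\ref{fact:equivLambdaContr} for $\lambda$-contractivity of $\mathcal{S}$ with respect to $x^+=(A+BK)x$, and the vertex characterization of admissibility for $\mathcal{U}$.

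The only ingredient not yet on the page is the algebraic relation tying the data to the (unknown) matrices $A$, $B$. Since the samples are generated by \eqref{eq:dtLTI-alt}, each pair of consecutive state samples satisfies $x_d(i+1) = A x_d(i) + B u_d(i)$, so collecting columns yields $X_{1,T} = A X_{0,T} + B U_{0,T}$. First I would state this identity. Then, using the consistency constraint \eqref{eq:consist-data}, namely $X_{0,T} G_K = I_n$, I would right-multiply by $G_K$ to obtain
\[
X_{1,T} G_K = A X_{0,T} G_K + B U_{0,T} G_K = A + B K ,
\]
with $K = U_{0,T} G_K$ as in \eqref{eq:K-data}. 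This is the key step; everything else is substitution.

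With this identity in hand, I would match the remaining conditions one by one. Condition \eqref{eq:lambdaContr-data} together with $P \ge 0$ is exactly \eqref{eq:lambdaContr1}. The second equation of \eqref{eq:LP-data} reads $P \mathrm{S} = \mathrm{S} X_{1,T} G_K = \mathrm{S}(A+BK)$, which is \eqref{eq:lambdaContr2}. Condition \eqref{eq:ctrlConstr-data} reads $\mathrm{U} U_{0,T} G_K s = \mathrm{U} K s \le \one$ for all $s \in \ver \mathcal{S}$, which is \eqref{eq:ctrlConstr}. Hence $K$ and $P$ satisfy \eqref{eq:origSol}; by Fact~\ref{fact:equivLambdaContr} the set $\mathcal{S}$ is $\lambda$-contractive for $x^+ = (A+BK)x$, and by the vertex argument $\mathcal{S}$ is admissible for $\mathcal{U}$, as claimed.

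I expect no genuine obstacle here: the proof is essentially the observation that \eqref{eq:consist-data} forces the data-dependent matrix $X_{1,T} G_K$ to coincide with the closed-loop matrix $A+BK$, which is precisely the mechanism by which the finite-data parametrization of \cite{depersis2019persistency} (building on \cite{willems2005note}) is exploited. The only point worth a sentence of care is that Theorem~\ref{thm:main} is one-directional — feasibility of \eqref{eq:LP-data} is \emph{sufficient} for a solution of Problem~\ref{probl:state} — so no claim of necessity or of genericity of a rank condition is made at this stage.
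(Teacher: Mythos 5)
Your proposal is correct and follows essentially the same route as the paper's proof: both hinge on the data identity $X_{1,T}=AX_{0,T}+BU_{0,T}$ and the consistency constraint \eqref{eq:consist-data} to conclude $X_{1,T}G_K=A+BK$, after which \eqref{eq:LP-data} reduces to \eqref{eq:origSol} by direct substitution. The only cosmetic difference is that you right-multiply the data identity by $G_K$ while the paper stacks $\smat{K\\I_n}=\smat{U_{0,T}\\X_{0,T}}G_K$ and factors through $\smat{B & A}$; the content is identical.
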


\begin{proof}
We show that the fulfilment of the constraints \eqref{eq:LP-data} implies
the fulfilment of \eqref{eq:origSol} with $K = U_{0,T} G_K$. Using \eqref{eq:K-data}
and \eqref{eq:consist-data}  we have
\begin{equation*}
\bmat{K \\ I_n} = \bmat{U_{0,T}\\ X_{0,T} } G_K,
\end{equation*}
so that
\begin{equation} \label{eq:cl_param_data}
\begin{split}
A + B K & = \bmat{B & A}\bmat{K \\ I_n}  \\
& = \bmat{B & A} \bmat{U_{0,T} \\ X_{0,T} } G_K = X_{1,T} G_K
\end{split}
\end{equation}
since $X_{1,T} = A X_{0,T} + B U_{0,T}$.
This immediately gives the claim.
\end{proof} 

\begin{remark}
We note that Theorem~\ref{thm:main} corresponds to solving 
a \emph{linear program} in the decision variables $G_K$ and $P$, 
hence it is numerically appealing.
\end{remark}

Compared with the case where the matrices $A$ and $B$ are known
(\emph{cf.} \eqref{eq:origSol}),
the data-driven solution of Theorem~\ref{thm:main} only provides 
sufficient conditions for $\lambda$-contractivity. The reason is that we made no assumptions on the data used for 
designing the controller. Intuitively, if the data do not carry enough
information on the plant dynamics, it might be impossible to
get a data-based solution.

In the context of stabilization (with no state and/or input constraints),
\cite{depersis2019persistency} shows conditions on the data which
enable a data-based parametrization of \emph{all} stabilizing state-feedback gains.
\cite{vanwaarde2019data} considers the \emph{minimum} amount of 
information on the data under which \emph{at least one} stabilizing gain
can be found from data.
Here, we follow the reasoning of \cite{depersis2019persistency},
which lends itself to a direct extension to the case of
state and/or input constraints. In fact, if the data enable 
a parametrization of \emph{all} stabilizing gains,
then any controller that guarantees $\lambda$-contractivity 
will necessarily belong to the feasibility set of \eqref{eq:LP-data} 
since $\lambda$-contractivity is a stronger property than asymptotic stability, 
as shown in Fact~\ref{fact:relationContrAS}.

The next result holds.

\begin{theorem}
\label{thm:main_v2}
Consider $\mathcal{S}$, $\mathcal{U}$ and $u$ as in Problem~\ref{probl:state}
and level of contractivity $\lambda\in [0,1)$. 
Let the data matrices $U_{0,T}$, $X_{0,T}$ and $X_{1,T}$ be as in \eqref{eq:data}.
Assume further that the matrix
\begin{equation}\label{eq:rank_cond}
\begin{split}
\Theta:= \begin{bmatrix}
U_{0,T}\\[0.1cm]
X_{0,T}
\end{bmatrix}
\end{split} 
\end{equation}
has full row rank.
Then, there exists a controller $K$ such that
$\mathcal{S}$ is $\lambda$-contractive for
$x^+ = (A+BK)x$ and admissible for $\mathcal U$
if and only if there exist decision variables $G_K$ and $P\ge0$ such that
\eqref{eq:LP-data} holds. Moreover, any such controller 
can be expressed as in \eqref{eq:K-data} for some $G_K$
satisfying \eqref{eq:LP-data}.
\end{theorem}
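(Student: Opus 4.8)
The plan is to prove the two directions separately, with the ``if'' direction already being essentially Theorem~\ref{thm:main}, and the ``only if'' direction being where the rank assumption on $\Theta$ does the work.

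First, the ``if'' direction: suppose $G_K$ and $P \ge 0$ satisfy~\eqref{eq:LP-data}. Then Theorem~\ref{thm:main} applies verbatim and tells us that $K = U_{0,T} G_K$ makes $\mathcal{S}$ be $\lambda$-contractive for $x^+ = (A+BK)x$ and admissible for $\mathcal{U}$. No rank condition is needed here, so this half is immediate.

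Second, the ``only if'' direction (and the ``moreover'' clause simultaneously): suppose some controller $K$ makes $\mathcal{S}$ be $\lambda$-contractive for $x^+ = (A+BK)x$ and admissible for $\mathcal{U}$. By the equivalence discussed around~\eqref{eq:origSol} (which rests on Fact~\ref{fact:equivLambdaContr} and the polyhedral description of $\mathcal{U}$), there is a $P \ge 0$ with $P\one \le \lambda\one$, $P\mathrm{S} = \mathrm{S}(A+BK)$, and $\mathrm{U} K s \le \one$ for all $s \in \ver\mathcal{S}$. The key step is to produce $G_K$ with $K = U_{0,T} G_K$ and $I_n = X_{0,T} G_K$ simultaneously, i.e. $\bigl[\begin{smallmatrix} K \\ I_n \end{smallmatrix}\bigr] = \Theta\, G_K$. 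Since $\Theta$ has full row rank $n+m$, it has a right inverse $\Theta^\dagger$ with $\Theta \Theta^\dagger = I_{n+m}$, so setting $G_K := \Theta^\dagger \bigl[\begin{smallmatrix} K \\ I_n \end{smallmatrix}\bigr]$ gives exactly $\Theta G_K = \bigl[\begin{smallmatrix} K \\ I_n \end{smallmatrix}\bigr]$, hence~\eqref{eq:consist-data} holds and $U_{0,T} G_K = K$. It then remains to check that this $G_K$, together with the same $P$, satisfies the rest of~\eqref{eq:LP-data}: \eqref{eq:lambdaContr-data} is just $P\one\le\lambda\one$; for the second equation, substitute $A+BK = X_{1,T}G_K$ using the same computation~\eqref{eq:cl_param_data} (valid because $X_{1,T} = A X_{0,T} + B U_{0,T}$ and $\bigl[\begin{smallmatrix} K \\ I_n \end{smallmatrix}\bigr] = \bigl[\begin{smallmatrix} U_{0,T} \\ X_{0,T} \end{smallmatrix}\bigr] G_K$), so $\mathrm{S}(A+BK) = \mathrm{S} X_{1,T} G_K$ matches $P\mathrm{S}$; and~\eqref{eq:ctrlConstr-data} follows from $\mathrm{U} K s \le \one$ since $U_{0,T}G_K = K$. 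This also proves the ``moreover'' part, because the $G_K$ constructed here is a witness of feasibility of~\eqref{eq:LP-data} and reproduces the given $K$ via~\eqref{eq:K-data}.

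I expect the main (and really only) obstacle to be the existence of a single $G_K$ solving the joint system $\Theta G_K = \bigl[\begin{smallmatrix} K \\ I_n \end{smallmatrix}\bigr]$ — this is precisely what full row rank of $\Theta$ buys us, and it is worth stating why: without the rank condition, $\lambda$-contractivity might be achievable in the model-based sense by some $K$ that cannot be written as $U_{0,T}G_K$ consistently with $X_{0,T}G_K = I_n$, which is exactly the gap between Theorem~\ref{thm:main} (sufficiency only) and Theorem~\ref{thm:main_v2}. Everything else is re-using the already-established equivalences and the closed-loop data parametrization~\eqref{eq:cl_param_data}.
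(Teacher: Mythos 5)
Your proposal is correct and follows essentially the same route as the paper: the ``if'' direction is delegated to Theorem~\ref{thm:main}, and the ``only if'' direction (together with the ``moreover'' clause) uses the full row rank of $\Theta$ to solve $\Theta G_K = \bigl[\begin{smallmatrix} K \\ I_n \end{smallmatrix}\bigr]$ for any model-based feasible $K$ and then transfers \eqref{eq:origSol} to \eqref{eq:LP-data} via the parametrization \eqref{eq:cl_param_data}. Your explicit construction of $G_K$ through a right inverse of $\Theta$ is just a slightly more concrete phrasing of the paper's observation that the identity \eqref{eq:theta} is solvable for arbitrary $K$.
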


\begin{proof}
As justified below~\eqref{eq:origSol}, there exists a controller $K$ such that
$\mathcal{S}$ is $\lambda$-contractive for
$x^+ = (A+BK)x$ and admissible for $\mathcal U$
if and only if there exist $K$ and $P \ge 0$ such that \eqref{eq:origSol} holds. 
Theorem~\ref{thm:main} proves that if \eqref{eq:LP-data}-\eqref{eq:K-data} hold, 
then \eqref{eq:origSol} holds (sufficiency of the first part of the statement). 
On the other hand, if $\Theta$ is full row rank, the identity
\begin{equation} \label{eq:theta}
\bmat{K \\ I_n} = \Theta \, G_K 
\end{equation}
can be solved for \emph{arbitrary} $K$ with respect to $G_K$, and by the 
same derivation as in \eqref{eq:cl_param_data}, each solution $G_K$ satisfies $A + B K = X_{1,T}G_K$. 
Thus, if \eqref{eq:origSol} holds, then \eqref{eq:LP-data}-\eqref{eq:K-data} 
hold (necessity of the first part of statement). Finally, 
\eqref{eq:theta} also implies that any such controller 
can be expressed as in \eqref{eq:K-data} for some matrix $G_K$
satisfying \eqref{eq:LP-data}.
\end{proof} 

An interesting result related to the matrix $\Theta$ in~\eqref{eq:rank_cond} is that 
if the system \eqref{eq:dtLTI-alt} is controllable, then one can always ensure
that $\Theta$ has full row rank if the experimental data 
originate from exciting input signals. The result is simple 
and worth mentioning in Fact~\ref{fact:FL} below after some needed definitions.
	
\begin{definition}\label{def:Hankel}
Given a sequence $z(0),z(1),\ldots \in \mathbb{R}^\sigma$, 
we denote its \emph{Hankel matrix of depth $t$} as
\begin{equation*}
Z_{i,t,N} := \begin{bmatrix}
z(i) & z(i+1) & & \cdots & z(i+N-1)\\
z(i+1) & z(i+2) & &\cdots & z(i+N)\\
\vdots & \vdots & & \ddots & \vdots\\
z(i+t-1) & z(i+t) & & \cdots & z(i+t+N-2)\\
\end{bmatrix}
\end{equation*}
where $i \in \mathbb Z$ and $t, N \in \mathbb N$. 
For $t = 1$, we denote its Hankel matrix\footnote{
By Definition \ref{def:Hankel}, the data matrices 
$U_{0,T}$, $X_{0,T}$ and $X_{1,T}$ in~\eqref{eq:data} are precisely the 
Hankel matrices of depth $1$ of the input and (shifted) state sequences.
} as
$
Z_{i,N} := \begin{bmatrix}
z(i) & \cdots & z(i+N-1)
\end{bmatrix}
$.
\end{definition}

\smallskip
\begin{definition}\label{def:PE}
The signal $z(0),\ldots, z(T-1) \in \mathbb R^\sigma$ 
is \emph{persistently exciting of order $L$} if the matrix
$Z_{0,L,T-L+1}$ has full rank $\sigma L$.
\end{definition}

\begin{myfact} \label{fact:FL}
\emph{\cite[Cor. 2]{willems2005note}} Let system \eqref{eq:dtLTI-alt} be controllable. 
If the input sequence $u_{d}(0),\ldots,u_{d}(T-1)$ is persistently exciting 
of order $n+1$ then the 
matrix $\Theta$ has full row rank. 
\end{myfact}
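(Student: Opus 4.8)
The plan is to reduce Fact~\ref{fact:FL} to the known Willems et al.\ result on persistency of excitation, so the main work is to set up the state-trajectory bookkeeping correctly and then invoke \cite[Cor.~2]{willems2005note}. First I would recall that, by the fundamental lemma, persistency of excitation of order $n+1$ of the input, together with controllability of~\eqref{eq:dtLTI-alt}, implies that the stacked Hankel matrix
\begin{equation*}
\begin{bmatrix} U_{0,1,T} \\ X_{0,n,T-n+1} \end{bmatrix}
\end{equation*}
built from the input of depth $1$ and the state of depth $n$ has full row rank; this is precisely the statement quoted. The key observation is then that $\Theta$ in~\eqref{eq:rank_cond} is a \emph{submatrix} (a selection of rows) of this larger Hankel matrix, consisting of the first input block-row and the first state block-row. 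Hence it suffices to argue that this particular row selection does not drop the rank, i.e.\ that the rows of $\Theta$ are among a full-rank set of rows.

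The cleanest route, which I expect to be the intended one, is slightly different and avoids worrying about which rows survive: I would note that the depth-$n$ state Hankel matrix $X_{0,n,T-n+1}$ has its rows spanned by products of $A$ and the columns of $\Theta$ via the recursion $x_d(t+1) = A x_d(t) + B u_d(t)$, i.e.\ $x_d(t+k)$ is an affine function of $x_d(t), u_d(t), \dots, u_d(t+k-1)$. Consequently the row space of the big Hankel matrix equals the row space of $\Theta$ after accounting for these linear dependencies; more precisely, there is a matrix $M$ with $I$ in the appropriate block such that $\begin{bmatrix} U_{0,1,T} \\ X_{0,n,T-n+1}\end{bmatrix} = M \begin{bmatrix} U_{0,n,T-n+1} \\ X_{0,1,T-n+1}\end{bmatrix}$ up to truncating columns. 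Then full row rank of the left-hand side — which has $m(n+1) + $ (fewer) rows... — actually forces full row rank of $\begin{bmatrix} U_{0,n,T-n+1} \\ X_{0,1,T-n+1}\end{bmatrix}$, whose rows are a superset of those of $\Theta$ (restricted to the same columns), giving full row rank of $\Theta$ after noting truncating columns cannot increase rank and the column count is still $\ge n$.

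Honestly, the slickest argument is just: $\Theta$ has $n+m$ rows; by \cite[Cor.~2]{willems2005note} the Hankel matrix $\begin{bmatrix} U_{0,1,T} \\ X_{0,n,T-n+1}\end{bmatrix}$ has full row rank $m + n^2$ wait — that is not right either since the state-space is dimension $n$. Let me instead phrase the step I actually trust: the fundamental lemma gives that \emph{every} length-$(n+1)$ input/state trajectory of the system lies in the column span of the depth-$(n+1)$ Hankel matrix; equivalently the map sending $G$ to the trajectory $\begin{bmatrix} U_{0,T} \\ X_{0,T} \end{bmatrix} G$ is surjective onto all $\begin{bmatrix}u \\ x\end{bmatrix}$ pairs, hence $\Theta$ has full row rank $m+n$. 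So the single nontrivial step is: \textbf{citing the fundamental lemma correctly and identifying $\Theta$ as the relevant depth-$1$ input / depth-$1$ state block}, and the main obstacle is purely notational — matching the depth-$(n+1)$ statement of \cite{willems2005note} to the depth-$1$ matrices $U_{0,T},X_{0,T}$ used here, which is handled by observing that a depth-$1$ Hankel matrix is a row selection of a deeper one and that controllability makes the selected rows span everything reachable in $n$ steps. I would then state that no further computation is needed, since the conclusion is exactly the hypothesis of Theorem~\ref{thm:main_v2}.
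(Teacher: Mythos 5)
The paper does not prove this statement at all: Fact~\ref{fact:FL} is quoted verbatim from Willems et al.\ (Cor.~2 of that paper), so there is nothing to match your argument against except the citation itself. Your final paragraph does land on the right reading of that citation: Corollary~2 states that, for a controllable system, persistency of excitation of the input of order $L+n$ implies $\rank \bigl[\begin{smallmatrix} U_{0,L,T-L+1} \\ X_{0,T-L+1} \end{smallmatrix}\bigr] = Lm+n$, and the present fact is exactly the instance $L=1$ (order $1+n=n+1$, rank $m+n$), where the full-row-rank claim for $\Theta$ is equivalent to the column span of $\bigl[\begin{smallmatrix} U_{0,T} \\ X_{0,T} \end{smallmatrix}\bigr]$ being all of $\real^{m+n}$, i.e.\ every instantaneous input/state pair being reachable as a combination of the data columns. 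So in substance you are doing what the paper does: invoking the fundamental lemma at window length one.

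That said, the detours you take before getting there contain genuine errors and should be cut rather than left as alternative routes. The matrix $\bigl[\begin{smallmatrix} U_{0,1,T} \\ X_{0,n,T-n+1} \end{smallmatrix}\bigr]$ you write down in the first paragraph is not well formed (the two blocks have $T$ and $T-n+1$ columns respectively), and it is not the object Corollary~2 refers to. The ``row selection'' argument in the second paragraph also does not go through as stated: $\Theta$ is not a row-submatrix of a deeper Hankel matrix without also truncating columns, and truncating columns can only lower the rank, so full row rank of the deeper matrix does not transfer to $\Theta$ by that route; moreover deriving the $L=1$ case from a deeper-window case would require persistency of excitation of a higher order than the hypothesis provides. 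The clean proof is the one-line specialization $L=1$ of Corollary~2; everything else in the proposal is either redundant or incorrect.
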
 

As shown in Fact \ref{fact:FL}, controllability of the system ensures that 
one can guarantee by design that $\Theta$ has full row rank.
Controllability is actually also important for enabling the existence 
of a controller achieving $\lambda$-contractivity. In fact,
for a given $\mathcal{S}$, a controller achieving $\lambda$-contractivity
need not exist. In that case, one may use the same data and search for 
different sets $\mathcal{S}'$ with different shapes until
the constraints in \eqref{eq:LP-data} become feasible. Controllability 
is beneficial in this respect because it ensures that a 
$\lambda$-contractive C-set $\mathcal{S}'$ exists, as pointed out in 
Remark~\ref{rem:ctrlAndExistence}. Alternatively, if one wants to 
design $\mathcal{S}'$, the corresponding matrix $\mathrm{S}'$ 
becomes a decision variable and \eqref{eq:LP-data} becomes 
a \emph{bilinear} program, as pointed out in \cite[p.~1755]{blanchini1999set}).

\begin{remark}
\label{rem:wabersichZeilinger}
Compared to \cite{wabersich2018scalable}, our approach considers unknown linear dynamics 
instead of known linear dynamics with unknown nonlinear term.
On the other hand, under a rank condition on the data, our approach always determines 
a solution if there is one (\emph{cf.}~Theorem~\ref{thm:main_v2}) 
instead of providing ellipsoidal under-approximations of the original polyhedral set. Moreover,
by approaching the problem in terms of
$\lambda$-contractivity, our method does not involve switching between 
a given learning-based and a designed safe controller as in \cite[Eq.~(2)]{wabersich2018scalable}, 
which may introduce undesired chattering.
\end{remark}

\subsection{$\lambda$-contractivity and decay rate}

As shown in~\cite{vassilaki1988feedback}, the function 
$V \colon \mathcal{S} \to \real$ defined as
\begin{equation}
\label{eq:V}
V(x):= \max_{i \in \{ 1, \dots , n_s\} } |\mathrm{S}^{(i)} x|
\end{equation}
is a \emph{polyhedral} Lyapunov function for the closed-loop dynamics 
$x^+ = (A + BK) x$ constrained on the set $\mathcal{S}$, 
and ensures that the origin is asymptotically stable. 
Indeed, $V$ satisfies the following properties:\newline
\textit{(i)} $V(x) \ge 0$ for all $x \in \mathcal{S}$, 
and $V(x)=0$ if and only if $x=0$ 
\footnote{
$V(x)=0$ if and only if $\mathrm{S} x = 0$, 
whose only solution is $x = 0$ because $\mathrm{S}$ has rank $n$ 
due to the C-set $\mathcal{S}$ being bounded. Indeed,
if $\mathrm{S}$ had not rank $n$, the nullspace of $\mathrm{S}$ 
would contain at least one nonzero vector $\bar x \neq 0$ satisfying $\mathrm{S} \bar x = 0$, 
so that $\mathrm{S} (M \bar x) = 0$ would also hold for an arbitrarily large $M$. 
But then $\mathrm{S} (M \bar x)  \le \one$ and $M \bar x \in \mathcal{S}$, 
contradicting boundedness of $\mathcal{S}$.},\newline
\textit{(ii)} it holds that
\begin{align}
& V(x^+) := \max_{i \in \{ 1, \dots , n_s\} } |\mathrm{S}^{(i)} x^+| =  
\max_{i \in \{ 1, \dots , n_s\} } |\mathrm{S}^{(i)} (A+ BK) x| \nonumber \\
& \overset{\eqref{eq:lambdaContr2}}{=} \max_{i \in \{ 1, \dots , n_s\} } 
\bigg| \sum_{j=1}^n p_{ij} \mathrm{S}^{(j)} x \bigg| \le \max_{i \in \{ 1, \dots , n_s\} } 
\sum_{j=1}^n | p_{ij} | |\mathrm{S}^{(j)} x| \nonumber \\
& \overset{\eqref{eq:V}}{\le} \max_{i \in \{ 1, \dots , n_s\} }  
V(x) \sum_{j=1}^n | p_{ij} | \overset{P \ge 0,\,\eqref{eq:lambdaContr1} }{\le} \lambda V(x). \label{eq:V(x+)}
\end{align}

Properties \textit{(i)} and \textit{(ii)} imply asymptotic stability of the origin. 
In view of~\eqref{eq:V(x+)}, the level of contractivity $\lambda$ is also 
the decay rate of the Lyapunov function $V$, and it is thus of interest to 
minimize $\lambda \in [0,1)$ as proposed for instance in~\cite{vassilaki1988feedback}. 
It is straightforward to do this based \emph{only} on data, as shown in the next result.

\begin{corollary}
\label{cor:main}
Consider the same setting as in Theorem \ref{thm:main}.
If there exist decision variables $\lambda$, $G_K$ and $P \ge 0$ solving
\begin{equation}
\label{eq:LP-min-data} 
\begin{split}
& \min \lambda \\
& \text{ \emph{such that} } 0 \le \lambda < 1 \text{ \emph{and} } \eqref{eq:LP-data} \text{ \emph{holds,}}
\end{split}
\end{equation}
the controller $K$ as in \eqref{eq:K-data} ensures that
$\mathcal{S}$ is $\lambda$-contractive for $x^+ = (A+BK)x$ 
and admissible for $\mathcal{U}$. 
\qedp
\end{corollary}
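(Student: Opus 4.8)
The plan is to reduce Corollary~\ref{cor:main} directly to Theorem~\ref{thm:main}. First I would observe that problem~\eqref{eq:LP-min-data} is a linear program: the cost $\lambda$ is linear, and all the constraints in \eqref{eq:LP-data} together with $0 \le \lambda$ are linear in the decision variables $(\lambda, G_K, P)$ --- the products $P\mathrm{S}$, $\mathrm{S}X_{1,T}G_K$, $\mathrm{U}U_{0,T}G_K s$ and $X_{0,T}G_K$ are all linear in $(G_K,P)$, while $\lambda\one$ is linear in $\lambda$. Hence, whenever \eqref{eq:LP-min-data} is feasible with some $\lambda<1$, the infimum of $\lambda$ over the feasibility set is attained, and by hypothesis there is a minimizer $(\lambda^\star, G_K^\star, P^\star)$ with $P^\star \ge 0$ and $\lambda^\star \in [0,1)$.

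Next I would simply invoke Theorem~\ref{thm:main} with the \emph{fixed} contractivity level $\lambda = \lambda^\star \in [0,1)$. Since $(G_K^\star, P^\star)$ satisfies \eqref{eq:LP-data} with this $\lambda^\star$ and $P^\star \ge 0$, Theorem~\ref{thm:main} guarantees that the gain $K = U_{0,T}G_K^\star$ of \eqref{eq:K-data} makes $\mathcal{S}$ $\lambda^\star$-contractive for $x^+ = (A+BK)x$ and admissible for $\mathcal{U}$. This is exactly the claimed conclusion.

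There is essentially no obstacle here beyond bookkeeping: the substantive content already lives in Theorem~\ref{thm:main}, and the corollary only adds the routine remark that treating $\lambda$ as a further decision variable and minimizing it over the feasibility set of \eqref{eq:LP-data} keeps us inside the hypotheses of that theorem for the resulting optimal value, while --- in view of the Lyapunov estimate \eqref{eq:V(x+)} for $V$ in \eqref{eq:V} --- producing the smallest available decay rate. The only mild point worth a sentence is attainment of the minimum, which follows from the linear-program structure; alternatively, one could state the conclusion for any feasible (not necessarily optimal) $\lambda \in [0,1)$ of \eqref{eq:LP-data}, and the same one-line argument via Theorem~\ref{thm:main} applies verbatim.
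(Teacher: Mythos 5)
Your proposal is correct and matches the paper's (implicit) argument: the corollary is stated without proof precisely because, as you note, any feasible point $(\lambda^\star, G_K^\star, P^\star)$ of \eqref{eq:LP-min-data} satisfies the hypotheses of Theorem~\ref{thm:main} with the fixed level $\lambda^\star\in[0,1)$, so the conclusion follows immediately. Your side remark on attainment of the minimum is not even needed, since existence of a solution to \eqref{eq:LP-min-data} is part of the corollary's hypothesis.
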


The decision variables $\lambda$, $G_K$ and $P$ enter~\eqref{eq:LP-min-data} 
in a linear fashion. Hence, \eqref{eq:LP-min-data} still
corresponds to a linear program and can then be solved efficiently.

\section{ROBUST DESIGN FOR NOISY DATA} \label{sec:noise}

In this section we present some preliminary result for the more realistic setting of noisy data. To this end, we consider a system of the form
\begin{equation}
\label{eq:sysDist}
x^+ = A x + B u + d,
\end{equation}
where $d \in \mathcal{D} \subset \real^n$ and $\mathcal{D}$ is a polyhedral C-set represented through convex combinations of its $n_d$ vertices $d^{(1)}, \dots, d^{(n_d)} \in \real^n$ as
\begin{equation}
\label{eq:set D}
\begin{split}
\mathcal{D} := & \Bigg\{ \sum_{i=1}^{n_d} \alpha_i d^{(i)} \colon \one^\top \alpha =1, \alpha \ge 0 \Bigg\}.
\end{split}
\end{equation} 
The disturbance affects both the data and the invariance properties of~\eqref{eq:sysDist}. As for the data, the experiment involves the quantities in~\eqref{eq:data} and, additionally, the \emph{unknown} sequence $d_d(0), \dots, d_d(T-1)$ of disturbances, organized as
\begin{equation}
\label{eq:data:D0T}
D_{0,T} :=
\begin{bmatrix}
d_d(0) & \dots & d_d(T-1)
\end{bmatrix}. 
\end{equation}
The overall data in~\eqref{eq:data:D0T} and \eqref{eq:data} satisfy then from \eqref{eq:sysDist} that
\begin{equation}
\label{eq:sysDynStack}
\begin{split}
X_{1,T} & = A X_{0,T} + B U_{0,T} + D_{0,T}\\
&  = \bmat{B & A}  \bmat{U_{0,T} \\ X_{0,T} } + D_{0,T}.
\end{split}
\end{equation}
As for the invariance properties, we consider accordingly the next robust version of Definition~\ref{def:(ctrl)Inv}.
\begin{definition}\emph{\cite[Def.~2.1]{blanchini1990feedback}}
A set $\mathcal{S}$ is \emph{robustly invariant with respect to $\mathcal{D}$ for}
\begin{equation}
\label{eq:sysCLdist}
x^+ = F x + d
\end{equation}
if for each initial condition $x(0) \in \mathcal{S}$ and each disturbance $d$ satisfying $d(t) \in \mathcal{D}$ for all $t \ge 0$, the corresponding solution to~\eqref{eq:sysCLdist} satisfies $x(t) \in \mathcal{S}$ for all $t \ge 0$.
\end{definition}

In this section we consider a slightly different setting than the rest of the paper, that is, guaranteeing that $\mathcal{S}$ is robustly invariant w.r.t. $\mathcal{D}$ for the closed-loop system and is admissible for $\mathcal{U}$, in the presence of noisy data. We recall the next instrumental result.
\begin{myfact}\emph{\cite[Thm.~2.1]{blanchini1990feedback}} \label{fact:robust inv}
Let $\mathcal{S}$ and $\mathcal{D}$ be C-sets. The set $\mathcal{S}$ is robustly invariant w.r.t. $\mathcal{D}$ for~\eqref{eq:sysCLdist} if and only if for each $s \in \ver \mathcal{S}$ and each $w \in \ver \mathcal{D}$, $F s +  w \in \mathcal{S}$.
\end{myfact}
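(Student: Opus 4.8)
The plan is to prove the two implications separately, using that a (polyhedral) C-set is the convex hull of its finitely many vertices and that $x \mapsto Fx$ is linear. Necessity is immediate: fix $s \in \ver \mathcal{S}$ and $w \in \ver \mathcal{D}$; since $s \in \mathcal{S}$ and $w \in \mathcal{D}$, the constant disturbance $d(t) \equiv w$ is admissible, and the solution of~\eqref{eq:sysCLdist} from $x(0) = s$ satisfies $x(1) = Fs + w$, so robust invariance forces $Fs + w \in \mathcal{S}$.

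For sufficiency I would first establish the one-step property ``$x \in \mathcal{S}$ and $d \in \mathcal{D}$ imply $Fx + d \in \mathcal{S}$'', and then close by induction on $t$. Writing $x = \sum_i \alpha_i s_i$ and $d = \sum_j \beta_j w_j$ as convex combinations of $\ver \mathcal{S}$ and $\ver \mathcal{D}$ (so $\alpha_i,\beta_j \ge 0$, $\sum_i \alpha_i = \sum_j \beta_j = 1$), linearity of $F$ and the two normalizations give
\[
Fx + d = \sum_i \alpha_i Fs_i + \sum_j \beta_j w_j = \sum_i \sum_j \alpha_i \beta_j \, (Fs_i + w_j),
\]
where the coefficients $\alpha_i\beta_j$ are nonnegative and sum to $(\sum_i \alpha_i)(\sum_j \beta_j) = 1$. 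Thus $Fx+d$ is a convex combination of the points $Fs_i + w_j$, each of which lies in $\mathcal{S}$ by hypothesis, and since $\mathcal{S}$ is convex we get $Fx + d \in \mathcal{S}$. The induction is then routine: given $x(0) \in \mathcal{S}$ and any admissible $d(\cdot)$ with $d(t) \in \mathcal{D}$ for all $t \ge 0$, the base case holds by assumption, and if $x(t) \in \mathcal{S}$ then the one-step property applied to $x(t)$ and $d(t) \in \mathcal{D}$ yields $x(t+1) = Fx(t) + d(t) \in \mathcal{S}$; this is exactly robust invariance of $\mathcal{S}$ w.r.t.\ $\mathcal{D}$.

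The only step that is not purely mechanical is the ``tensor'' convex-combination identity displayed above: its content is that the two convex combinations may be multiplied together precisely because each set of coefficients sums to one, which is what reduces robust invariance of the (infinite) set $\mathcal{S}$ to the finitely many vertex pairs. Everything else is bookkeeping. (For a general convex compact C-set one would replace $\ver$ by the set of extreme points and add a Krein--Milman plus closure argument, but the polytopic setting used throughout makes the finite statement sufficient.)
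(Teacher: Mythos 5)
Your proof is correct. Note that the paper itself offers no proof of this statement: it is quoted as a Fact from \cite[Thm.~2.1]{blanchini1990feedback}, so there is nothing to compare against; your argument (necessity via a constant vertex disturbance, sufficiency via the product convex combination $\sum_{i,j}\alpha_i\beta_j(Fs_i+w_j)$ followed by induction on $t$) is the standard one and is complete for the polytopic C-sets used throughout the paper, where $\mathcal{S}$ and $\mathcal{D}$ are indeed the convex hulls of their finitely many vertices.
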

This fact allows us to conclude that given the system in~\eqref{eq:sysDist} and for $\mathcal{S}$ and $\mathcal{U}$ and $u$ as in Problem~\ref{probl:state} 
and the C-set $\mathcal{D}$ in~\eqref{eq:set D},
$\mathcal{S}$ is
\begin{enumerate}[label=\textit{(\alph*)}]
\item robustly invariant w.r.t. $\mathcal{D}$ for $x^+ = (A+BK) x + d$,
\item admissible for $\mathcal{U}$
\end{enumerate}
if and only if there exists a decision variable $K$ such that
\begin{subequations}
\label{eq:iff robust inv}
\begin{align}
& \mathrm{S} ( (A+BK) s + w) \le \one & & \forall s \in \ver \mathcal{S}, \forall w \in \ver \mathcal{D}\\
& \mathrm{U} K  s \le \one  & & \forall s \in \ver \mathcal{S}.
\end{align}
\end{subequations} 

Let us apply to~\eqref{eq:iff robust inv} the same approach as in Section~\ref{sec:main} in light of the new dynamics in~\eqref{eq:sysDynStack}. If there exists a decision variable $G_K$ such that
\begin{subequations}
\label{eq:if robust inv data}
\begin{align}
& \mathrm{S} ( (X_{1,T} - D_{0,T} ) G_K s + w) \le \one & & \forall s \in \ver \mathcal{S}, \forall w \in \ver \mathcal{D} \label{eq:if robust inv data:inv}\\
& \mathrm{U} U_{0,T} G_K  s \le \one  & & \forall s \in \ver \mathcal{S}\\
& I_n = X_{0,T} G_K, & & 
\end{align}
\end{subequations} 
then the state-feedback gain $K = U_{0,T} G_K$ would ensure for $\mathcal{S}$ its desired properties \textit{(a)}--\textit{(b)} above. 
In particular, \eqref{eq:if robust inv data:inv} follows from
\begin{equation*}
\begin{split}
A + B K & = \bmat{B & A}\bmat{K \\ I_n}  \\
& = \bmat{B & A} \bmat{U_{0,T} \\ X_{0,T} } G_K =( X_{1,T} - D_{0,T} )G_K
\end{split}
\end{equation*}
where the last equality uses the new dynamics in~\eqref{eq:sysDynStack}.
However, the disturbance sequence leading to $D_{0,T}$ in \eqref{eq:if robust inv data:inv} is unknown. A possible way of overcoming this issue is to ask conservatively that \eqref{eq:if robust inv data:inv} be satisfied for all the possible sequences of the disturbance $d_d(0), \dots, d_d(T-1)$ as long as each $d_d(0), \dots, d_d(T-1)$ belongs to $\mathcal{D}$. To this end, define for $j \in \nat_T$ and $i \in \nat_{n_d}$ the matrix $\delta_{ji} \in \real^{n \times T}$ being zero except for its $j$-th column equal to $T d^{(i)}$, i.e.,
\begin{align*}
\begin{minipage}{1.1cm}
$\delta_{ji} : =$ \vspace*{.1cm}\\
\end{minipage} 
& \begin{matrix}  \big[ \underbrace{0}_{1\text{-st},} | & \dots &  | \underbrace{ T d^{(i)}}_{j\text{-th},} | &  \dots & |  \underbrace{0}_{T\text{-th column}} \big].  \end{matrix}
\end{align*}
The reason for the dependence on $T$ in the $j$-th column of $\delta_{ji}$ becomes clear in the proof of our next result.
\begin{proposition}
\label{propo:robust}
Consider $\mathcal{S}$, $\mathcal{U}$ and $u$ as in Problem~\ref{probl:state}, the disturbance $d$ belonging to the C-set $\mathcal{D}$ in~\eqref{eq:set D}, and let the data matrices $U_{0,T}$, $X_{0,T}$, $X_{1,T}$ and $D_{0,T}$ be as in \eqref{eq:data} and \eqref{eq:data:D0T}.
If there exists a decision variable $G_K$ such that%
\begin{subequations}
\label{eq:LP robust inv}
\begin{align}
& \mathrm{S} ( (X_{1,T} - \delta_{ji} ) G_K s + w) \le \one  \nonumber\\
& \hspace*{1cm} \forall s \in \ver \mathcal{S}, \forall w \in \ver \mathcal{D}, \forall j \in \nat_T, \forall i \in \nat_{n_d}  \label{eq:LP robust inv:inv}\\
& \mathrm{U} U_{0,T} G_K  s \le \one  \hspace*{1cm} \forall s \in \ver \mathcal{S}\\
& I_n = X_{0,T} G_K, 
\end{align}
\end{subequations}
then the state-feedback gain
\begin{equation*}
K = U_{0,T} G_K 
\end{equation*}
is such that $\mathcal{S}$ is robustly invariant w.r.t. $\mathcal{D}$ for $x^+ = (A+BK)x + d$ and admissible for $\mathcal U$.
\end{proposition}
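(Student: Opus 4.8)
The plan is to show that feasibility of~\eqref{eq:LP robust inv} forces~\eqref{eq:iff robust inv} to hold with $K = U_{0,T}G_K$; the conclusion then follows immediately, because by Fact~\ref{fact:robust inv} the conditions~\eqref{eq:iff robust inv} are exactly equivalent to $\mathcal{S}$ being robustly invariant w.r.t. $\mathcal{D}$ for $x^+ = (A+BK)x + d$ and admissible for $\mathcal{U}$. First I would recover the data-based closed-loop parametrization as in the proof of Theorem~\ref{thm:main}: from $K = U_{0,T}G_K$, $I_n = X_{0,T}G_K$ and~\eqref{eq:sysDynStack},
\begin{align*}
A + BK &= \bmat{B & A}\bmat{U_{0,T} \\ X_{0,T}}G_K \\
&= (X_{1,T} - D_{0,T})G_K ,
\end{align*}
so that $(A+BK)s + w = (X_{1,T}-D_{0,T})G_K s + w$ for every $s \in \ver\mathcal{S}$ and $w \in \ver\mathcal{D}$.

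The key step — and the reason for the scaling by $T$ in the definition of $\delta_{ji}$ — is the observation that $D_{0,T}$ is a \emph{convex combination} of the matrices $\delta_{ji}$. Indeed, writing each column $d_d(j-1) \in \mathcal{D}$ as $d_d(j-1) = \sum_{i=1}^{n_d}\alpha_{ji}d^{(i)}$ with $\alpha_{ji}\ge 0$ and $\sum_{i=1}^{n_d}\alpha_{ji} = 1$, the matrix $\sum_{j=1}^{T}\sum_{i=1}^{n_d}\tfrac{\alpha_{ji}}{T}\delta_{ji}$ has $j$-th column $\sum_{i=1}^{n_d}\tfrac{\alpha_{ji}}{T}(Td^{(i)}) = d_d(j-1)$, hence equals $D_{0,T}$; moreover the weights $\tfrac{\alpha_{ji}}{T}$ are nonnegative with $\sum_{j=1}^{T}\sum_{i=1}^{n_d}\tfrac{\alpha_{ji}}{T} = \sum_{j=1}^{T}\tfrac{1}{T} = 1$. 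Consequently $D_{0,T}G_K s$ is the same convex combination of the vectors $\delta_{ji}G_K s$.

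With this identity in hand, the first condition in~\eqref{eq:iff robust inv} follows from linearity of $x \mapsto \mathrm{S}x$ together with convexity: since the weights sum to one,
\begin{align*}
\mathrm{S}\big((A+BK)s + w\big) &= \sum_{j=1}^{T}\sum_{i=1}^{n_d}\tfrac{\alpha_{ji}}{T}\,\mathrm{S}\big((X_{1,T}-\delta_{ji})G_K s + w\big) \\
&\le \sum_{j=1}^{T}\sum_{i=1}^{n_d}\tfrac{\alpha_{ji}}{T}\one = \one ,
\end{align*}
where the inequality is~\eqref{eq:LP robust inv:inv} applied for each choice of $(s,w,j,i)$. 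The second condition in~\eqref{eq:iff robust inv}, namely $\mathrm{U}Ks \le \one$ for all $s\in\ver\mathcal{S}$, is precisely the second constraint of~\eqref{eq:LP robust inv} under $K = U_{0,T}G_K$. Hence~\eqref{eq:iff robust inv} holds, and by Fact~\ref{fact:robust inv} the set $\mathcal{S}$ has the claimed properties.

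I do not expect a real obstacle: the argument is the same bookkeeping as in Theorem~\ref{thm:main}, only now carried out convexly over the $T n_d$ ``disturbance vertices'' $\delta_{ji}$ and over the vertices of $\mathcal{S}$ and $\mathcal{D}$. The only point requiring care — and the actual content of the result — is recognizing that averaging the $\delta_{ji}$ over the $T$ time instants reconstructs every admissible $D_{0,T}$, which is exactly why each $\delta_{ji}$ carries the factor $T$ rather than $d^{(i)}$ alone.
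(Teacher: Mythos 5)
Your proposal is correct and follows essentially the same route as the paper: both reduce the claim to showing that \eqref{eq:LP robust inv:inv} implies \eqref{eq:if robust inv data:inv} for every admissible $D_{0,T}$, and both obtain this by writing each column of $D_{0,T}$ as a convex combination of the vertices $d^{(i)}$ and summing the scaled inequalities with weights $\alpha_{ji}/T$. Your phrasing of the key step as the single identity $D_{0,T}=\sum_{j,i}\tfrac{\alpha_{ji}}{T}\delta_{ji}$ is just a more compact packaging of the paper's two-stage summation (first over $i$, then over $j$).
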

\begin{proof}
The statement is proven if we show that \eqref{eq:LP robust inv:inv} implies that \eqref{eq:if robust inv data:inv} is verified for \emph{all} possible $D_{0,T}$ because \eqref{eq:if robust inv data} implies that \eqref{eq:iff robust inv} holds (with the same arguments as in the proof of Theorem~\ref{thm:main}), and \eqref{eq:iff robust inv} guarantees the statement.

Since each column of $D_{0,T}$ belongs to the C-set $\mathcal{D}$ in~\eqref{eq:set D}, $D_{0,T}$ can be written as
\begin{equation*}
D_{0,T}= \bmat{\sum_{i=1}^{n_d} \alpha_{1,i} d^{(i)} \Big| & \dots & \Big| \sum_{i=1}^{n_d} \alpha_{T,i} d^{(i)} }
\end{equation*}
where the vectors $\alpha_1, \dots, \alpha_T$ satisfy $\one^\top \alpha_1 = 1$ and $\alpha_1 \ge 0$, \dots, $\one^\top \alpha_T = 1$ and $\alpha_T \ge 0$. 
Consider in the rest of the proof arbitrary  $s \in \ver \mathcal{S}$ and $w \in \ver \mathcal{D}$.
\eqref{eq:LP robust inv:inv} implies that for such $s$ and $w$ and for each $j \in \nat_T$ and $i \in \nat_{n_d}$, 
\begin{equation}
\label{eq:conseq 1 of for all}
\frac{\alpha_{j,i} }{T} \mathrm{S} X_{1,T} G_K s 
- \frac{\alpha_{j,i} }{T} \mathrm{S} \delta_{ji} G_K s +
\frac{\alpha_{j,i} }{T} \mathrm{S}  w
\le \frac{\alpha_{j,i} }{T} \one
\end{equation}
because each $\alpha_{j,i} \ge 0$.
\eqref{eq:conseq 1 of for all} implies that the summation over all $i \in \nat_{n_d}$ holds as well, i.e., for each $j \in \nat_T$
\begin{align}
\label{eq:conseq 2 of for all}
& \frac{1}{T} \mathrm{S} X_{1,T} G_K s 
- \frac{1}{T} \mathrm{S} 
\bmat{0 \Big| & \dots & \Big| \sum_{i=1}^{n_d} \alpha_{j,i} T d^{(i)} \Big| & \dots & \Big|0} G_K s \nonumber \\
& \hspace*{5cm}+ \frac{1}{T} \mathrm{S}  w
\le \frac{1}{T} \one.
\end{align}
\eqref{eq:conseq 2 of for all} implies that the summation over all $j \in \nat_T$ holds as well, i.e.,
\begin{align*}
& \mathrm{S} X_{1,T} G_K s 
- \mathrm{S} 
\bmat{\sum_{i=1}^{n_d} \alpha_{1,i}  d^{(i)} \Big| & \dots & \Big|  \sum_{i=1}^{n_d} \alpha_{T,i}  d^{(i)} } G_K s \\
& \hspace*{4.7cm} + \mathrm{S}  w
\le  \one
\end{align*}
so \eqref{eq:if robust inv data:inv} holds indeed, which was the implication needed to complete the proof.
\end{proof}

Proposition~\ref{propo:robust} is a preliminary result due to the conservatism of replacing the constraints in~\eqref{eq:if robust inv data:inv} (where $D_{0,T}$ is unknown) with $n_d T$ as many such constraints in~\eqref{eq:LP robust inv:inv}. On the other hand, Proposition~\ref{propo:robust} still corresponds to solving a linear program in the decision variable $G_K$.

\section{NUMERICAL EXAMPLE}\label{sec:example}

In this section we illustrate the results of Section~\ref{sec:main} through an example taken from~\cite{vassilaki1988feedback}. 

The sets $\mathcal{S}$ in~\eqref{eq:set S} and $\mathcal{U}$ in~\eqref{eq:set U} 
are determined by the next matrices $\mathrm{S}$ and $\mathrm{U}$:
\begin{equation}
\label{eq:matrices S and U example}
\mathrm{S}:=
\begin{bmatrix*}[r]
{1}/{5} & & {2}/{5} \\
-{1}/{5} & & -{2}/{5} \\
-{3}/{20} & & {1}/{5} \\
{3}/{20} & & -{1}/{5} \\
\end{bmatrix*},\quad
\mathrm{U}:=
\begin{bmatrix*}[r]
1/7\\
-1/7
\end{bmatrix*},
\end{equation}
so that the set $\mathcal{S}$ corresponds to the
quadrilateral in a green, solid line in Figure~\ref{fig:sim}, 
while the set $\mathcal{U}$ corresponds to the condition $-7 \le u \le 7$. 
The level of contractivity is selected as $\lambda=0.84$.

\begin{figure}
\centering
\includegraphics[width=.95\columnwidth]{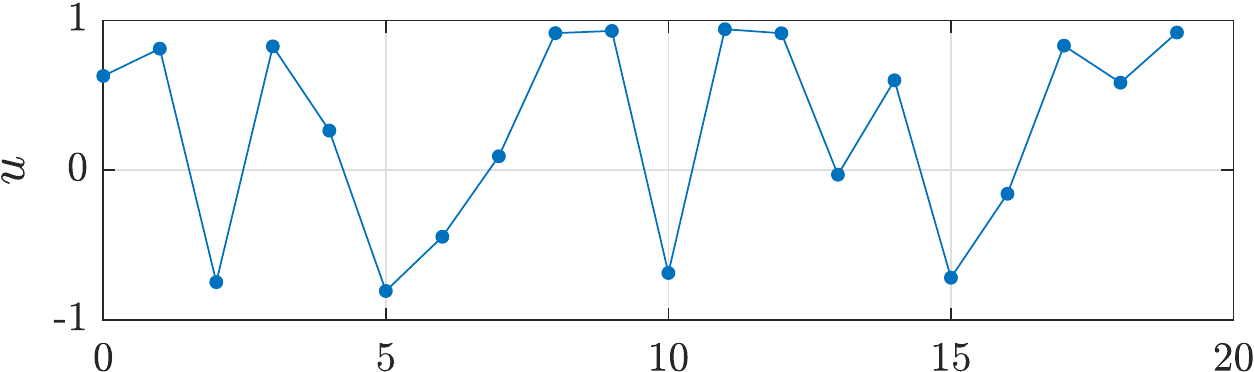}
\includegraphics[width=.95\columnwidth]{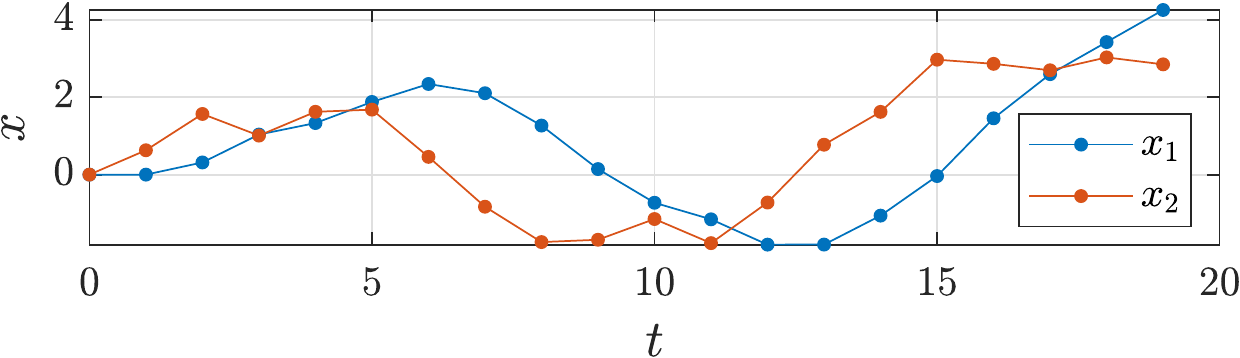}
\caption{Input and state sequences of data as in~\eqref{eq:data}, with $T=20$.}
\label{fig:data} 
\end{figure}

The data are collected from an open-loop experiment as in Figure~\ref{fig:data}, 
where $u$ is the realization of a random variable uniformly distributed on $[-1,1]$, 
and show that the underlying linear system is unstable.
The matrices $A$ and $B$ generating these data are
\begin{equation}
A:=
\begin{bmatrix*}[r]
4/5 & & 1/2\\
-2/5 & & 6/5
\end{bmatrix*}, \quad 
B:=
\begin{bmatrix}
0\\
1
\end{bmatrix},
\end{equation}
and are reported \emph{only} for illustrative purposes, because our solution 
relies \emph{only} on the collected data, as per Theorem~\ref{thm:main}. 

\begin{remark}
Full row rank of $\Theta$ in~\eqref{eq:rank_cond} can be checked from data. 
However, this condition holds by Fact~\ref{fact:FL} if $(A,B)$ is controllable 
and the input sequence is persistently exciting of order $n+1$.  As noted in 
\cite[\S~II.A]{depersis2019persistency}, persistence of excitation (see Definition~\ref{def:PE}) 
poses a mild necessary condition on the number of samples, i.e., $T \ge (m+1) n + m = 5$ in the considered case.
\end{remark}

\begin{figure}
\centering
\includegraphics[width=.95\columnwidth]{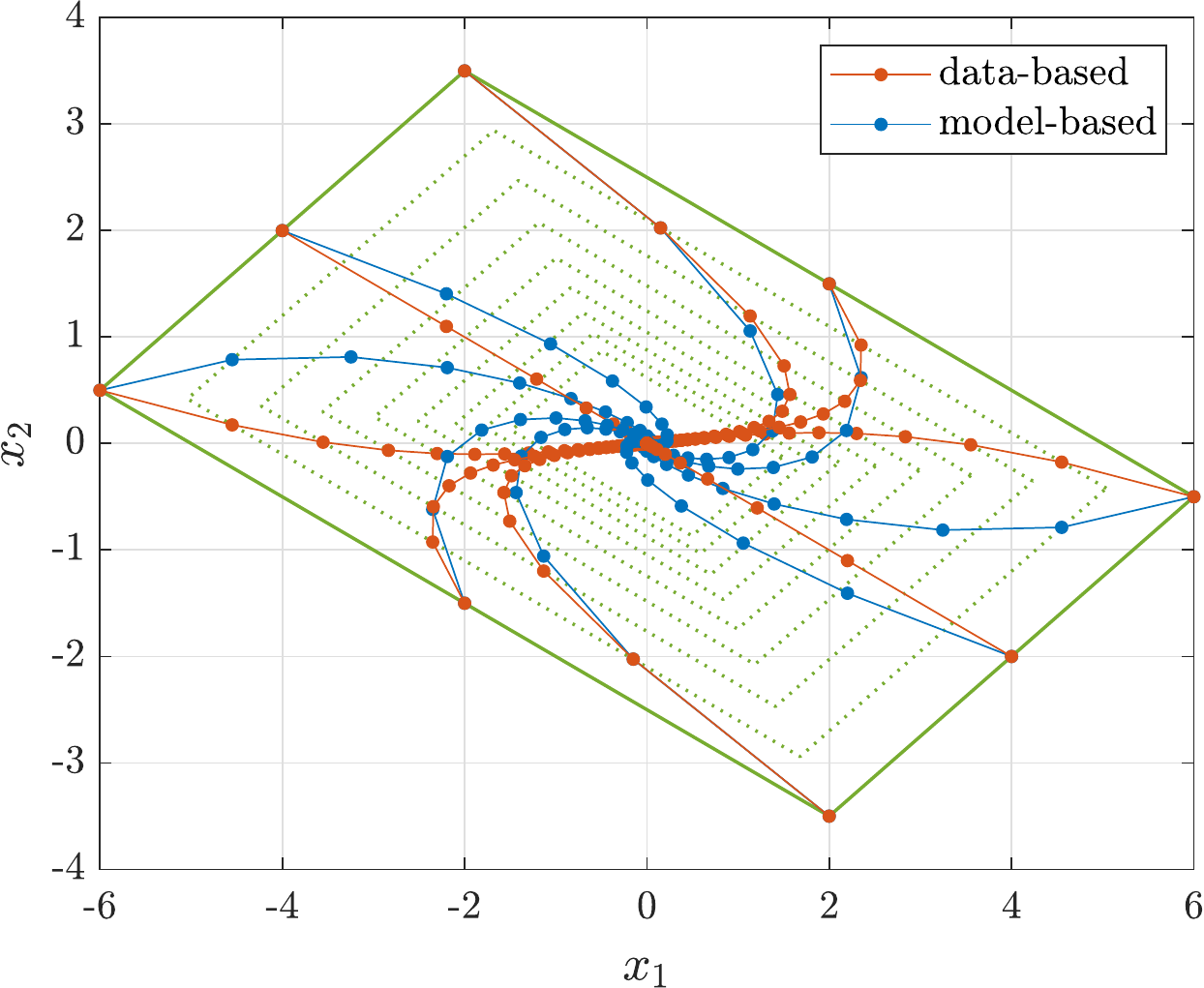}
\includegraphics[width=.95\columnwidth]{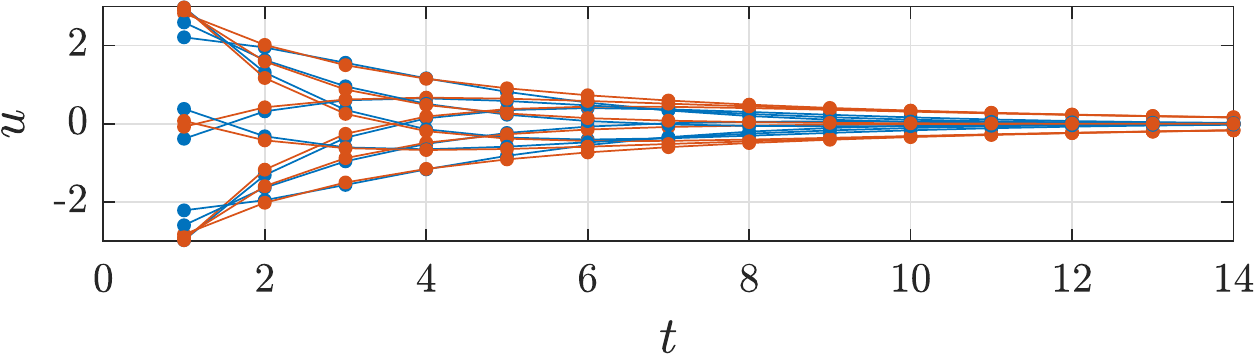}
\caption{Sets $\mathcal{S}$ and $\mathcal{U}$ given as in~\eqref{eq:matrices S and U example} 
and $\lambda = 0.84$. (Top) Solutions arising from the state feedback law $u=K x$ (see \eqref{eq:K}) 
designed based on data (orange), and from $u=K_{A,B} x$ (see \eqref{eq:K_AB}) based on the classical 
model-based approach (blue), set $\mathcal{S}$ (green, solid) and the sets $\lambda \mathcal{S}$, 
$\lambda^2 \mathcal{S}$, $\lambda^3 \mathcal{S}$, \dots (green, dotted). (Bottom) 
Control signal $u$ corresponding to the solutions in orange and blue depicted on top. 
The control signal  satisfies the constraints given by $\mathcal{U}$.}
\label{fig:sim} 
\end{figure}

The linear optimization problem in Theorem~\ref{thm:main} is solved in the variables 
$G_K$ and $P$, and the resulting $K$ in~\eqref{eq:K-data} is
\begin{equation}
\label{eq:K}
K=\bmat{0.420  &  -0.610}.
\end{equation}
{Only} for illustrative purposes, we also solve the problem in~\eqref{eq:origSol} and obtain a gain matrix
\begin{equation}
\label{eq:K_AB}
K_{A,B}=\bmat{0.313  &  -0.671}.
\end{equation}
The solutions resulting from simulating the system with state feedback law $u=K x$ 
(our data-based solution) and $u=K_{A,B} x$ (the model-based solution) are in 
Figure~\ref{fig:sim} and show that Problem~\ref{probl:state} is solved.

As an alternative to solving the feasibility problem in Theorem~\ref{thm:main}, 
we solve the minimization problem in Corollary~\ref{cor:main} using the same data. 
In this case we obtain $\lambda = 0.758$ and $K=K_{A,B}=\bmat{0.379  &  -0.692}$ and the resulting solutions are in Fig.~\ref{fig:sim-min}. 

Some comments on the results corresponding to Figures~\ref{fig:sim} and \ref{fig:sim-min} 
can be made. Because $\Theta$ in~\eqref{eq:rank_cond} has full row rank, feasibility of conditions~\eqref{eq:origSol} 
in the variables $K$ and $P$ is equivalent to feasibility of conditions \eqref{eq:LP-data} in the 
variables $G_K$ and $P$ by Theorem~\ref{thm:main_v2}. In general, the two feasibility problems 
yield different solutions as in Figure~\ref{fig:sim}, e.g., due to different initializations of the decision variables. 
However, since feasible linear programs have a global minimum, minimizing $\lambda$ 
under \eqref{eq:origSol} or \eqref{eq:LP-data} yields the same value for $\lambda$. 
Moreover, minimizing $\lambda$ reduces the size of the feasibility set (due to the constraints $P \ge 0$ and $P \one \le \lambda \one$),
which leads in this case to the fact that the minimizers $G_K$ and $P$ under the conditions in~\eqref{eq:LP-data} 
yield the same feedback gain as the minimizers $K$ and $P$ under the conditions in~\eqref{eq:origSol}.

\begin{figure}
\centering
\includegraphics[width=.95\columnwidth]{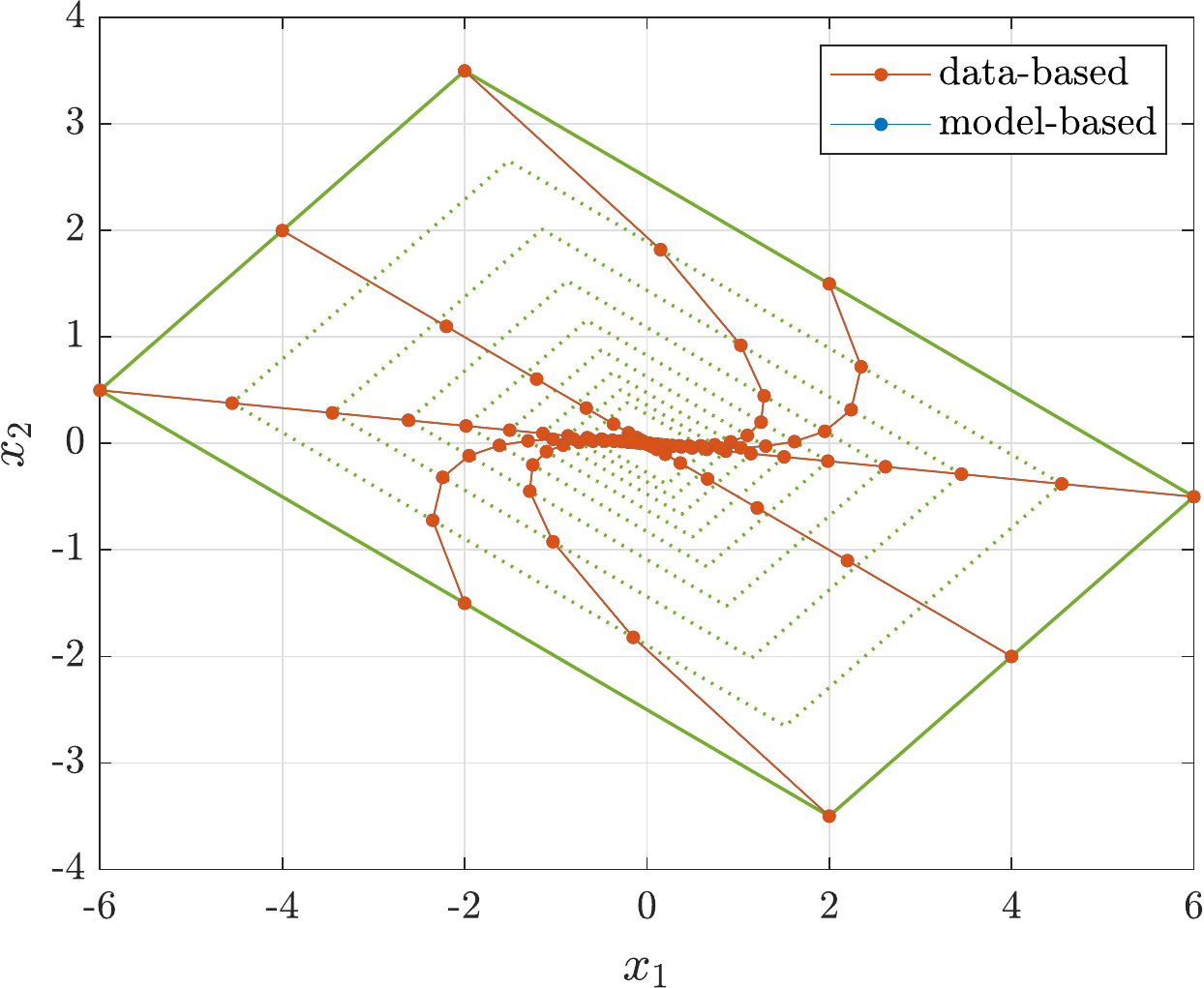}
\includegraphics[width=.95\columnwidth]{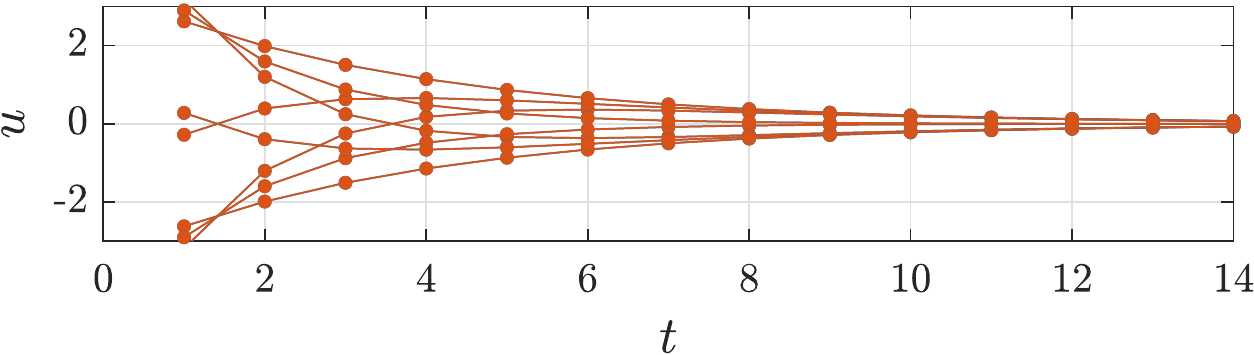}
\caption{See the caption of Figure~\ref{fig:sim} for the illustration convention 
of the quantities in this figure, which correspond to $\lambda=0.758$ minimized as in Corollary~\ref{cor:main}.}
\label{fig:sim-min} 
\end{figure}

\section{CONCLUSIONS} \label{sec:conclusion}

This paper proposes a data-based solution for designing a controller enforcing that a given polyhedral C-set for the state is $\lambda$-contractive (hence, invariant) and given polyhedral convex constraints on the control are satisfied. With respect to classical approaches from set-invariance, we show that the data-based solution still arises from a numerically-efficient linear program, and that, under a rank condition on the collected data, the data-based solution is feasible if and only if the model-based solution is feasible. The level of $\lambda$-contractivity is guaranteed based on the data.
Our main results are given for the nominal case when the input and state data are not affected by noise and a preliminary result is given for noisy data. 

\bibliography{refs}

\end{document}